\documentclass[
  aps,
  prl,
  reprint,
  longbibliography,
  nofootinbib
]{revtex4-2}

\usepackage[T1]{fontenc}
\usepackage[english]{babel}
\usepackage{graphicx}
\usepackage{xcolor}
\usepackage{rotating}

\usepackage{amsmath, amssymb, graphics, amsthm, isomath}
\usepackage{physics}
\usepackage{verbatim}
\usepackage{tikz}
\usepackage{tikz-cd}
\usepackage{circuitikz}
\usetikzlibrary{patterns}
\usetikzlibrary{through,hobby}
\usetikzlibrary{decorations.markings}
\usetikzlibrary{fadings,shadings}
\usetikzlibrary{plotmarks}
\usetikzlibrary{positioning}
\usetikzlibrary{decorations,arrows}
\usetikzlibrary{decorations.pathreplacing}
\usetikzlibrary{chains, scopes, positioning, backgrounds, shapes, fit, shadows, calc, arrows.meta, decorations.pathreplacing}
\usepackage{blkarray}
\usepackage{svg}
\usepackage{mathrsfs}
\usepackage{algorithm}
\usepackage{algpseudocode}

\usepackage{xcolor}
\definecolor{mycitecolor}{rgb}{0.0, 0.45, 0.85}   

\newcommand{\im}{\operatorname{im}}

\usepackage[
  colorlinks=true,
  linkcolor=mycitecolor,
  citecolor=mycitecolor,
  urlcolor=mycitecolor,
  hyperindex=true,
  linktocpage=true
]{hyperref}

\usepackage[capitalise,compress]{cleveref}
\usepackage{tcolorbox}

\newtcolorbox{shadedtheorem}{
  colback=gray!15,    
  colframe=white,     
  boxrule=0pt,        
  arc=0pt,            
  left=5pt,           
  right=5pt,
  top=5pt,
  bottom=5pt
}

\renewcommand\epsilon{\varepsilon}

\theoremstyle{upright-hang}
\newtheorem{thm}{Theorem}
\newtheorem{cor}[thm]{Corollary}

\newtheorem{lem}[thm]{Lemma}

\usepackage{mathtools}
\tikzstyle{densely dashed}= [dash pattern=on 4pt off 3pt]

\usepackage{dsfont}

\allowdisplaybreaks

\usepackage{centernot}

\begin{document}

\definecolor{darkred}{RGB}{140,0,0} 

\makeatletter
\def\frontmatter@above@affiliation{\addvspace{5pt}}
\makeatother

\preprint{IQuS@UW-21-138}

\title{Self-Correcting 3D Surface Code with Growing Local Dimension}

\author{Jinkang Guo}
\affiliation{InQubator for Quantum Simulation, Department of Physics, University of Washington, Seattle, WA 98195, USA}

\date{\today}

\begin{abstract}

We construct self-correcting 3D surface codes over \(\mathbb Z_q\) with local dimension \(q\) growing with system size. The model preserves the local geometry and stabilizer support of the ordinary 3D surface code and remains translation invariant in the bulk, while modifying only the stabilizer Pauli powers. We use the Lee metric to characterize error weight and consider local Metropolis dynamics with single-qudit \(X^{\pm1}\) and \(Z^{\pm1}\) updates, each changing the energy by at most a constant independent of $q$ and system size $N$. We prove that, below the critical temperature, when the local dimension scales as \(q=\Theta(N)\), the memory times in both error sectors grow at least as \(\exp\left[\Omega(N^{1/3})\right]\). Our numerical results further support these analytical results. Together, these results establish growing local Hilbert-space dimension as a new resource for designing self-correcting quantum memories while retaining simple local geometry and local thermal dynamics.

\end{abstract}

\maketitle


\section{Introduction} \label{sec:intro}

Self-correcting quantum memories aim to preserve quantum information without continual active error correction \cite{Terhal2015quantum,Brown2016quantum,Hong2025quantum}. The 4D toric code provides the canonical example \cite{Dennis2002topological,Alicki2010thermal}. Throughout this work, we use self-correction in the sense of Ref.~\cite{Balasubramanian2026passive}: under local thermalization dynamics, the memory lifetime grows at least as \(\exp(N^\gamma)\) for some \(\gamma>0\) below a nonzero critical temperature. Achieving passive self-correction in three dimensions has long been a central challenge \cite{Brown2016quantum,Alicki2009thermalization,BravyiTerhal2009nogo,Yoshida2011feasibility,LandonCardinal2013local,Haah2011local,Kim2012three,BravyiHaah2013quantum,Stengele2026haah,Michnicki2014three,SivaYoshida2017marginal,Roberts2020symmetry,Williamson2024layer,Gu2026layer,Williamson2026partial}. Recently, a self-correcting quantum memory in three dimensions under local thermal dynamics was proposed in Ref.~\cite{Balasubramanian2026passive}. Their result also illustrates how difficult this problem is. The construction of the code is recursive, and establishing its self-correcting behavior requires a highly nontrivial proof. 
Beyond passive self-correction, long-lived low-dimensional quantum memories have also attracted considerable recent interest \cite{Tsao2026entropic}. Existing approaches include local or autonomous decoding
\cite{Herold2015cellular,Herold2017cellular,Kubica2019cellular,Balasubramanian2026local,Dunnweber2026quantum,Shtanko2025bounds}
and engineered dissipation
\cite{Pastawski2011engineered,Fujii2014measurement,Lieu2024candidate,Lieu2020symmetry,Liu2024dissipative}, which introduce additional correction processes during storage, as well as long-range interactions mediated by auxiliary systems \cite{Hamma2009toricboson,Pedrocchi2013enhanced}, which modify the interaction structure. Here we explore a different resource tradeoff: we retain the simple local geometry of the 3D surface code~\cite{Kitaev2003fault,Hamma2005string}, translational invariance in the bulk and local thermal dynamics, while allowing the local Hilbert-space dimension \(q\) to scale with system size. The growth of \(q\) places the construction outside the scope of the no-go theorem of Ref.~\cite{Yoshida2011feasibility}, which assumes a fixed local Hilbert-space dimension.

Large-\(q\) qudits also have independent physical motivation \cite{Brock2025quantum}. Many multi-level quantum systems naturally provide local Hilbert-space dimensions larger than two, while in nuclear-spin systems \cite{Asaad2020coherent} and continuous-variable bosonic encodings such as GKP \cite{Gottesman2001encoding} or cat codes \cite{Cochrane1999macroscopically,Albert2018performance}, noise can favor small Pauli shifts over large ones \cite{Guo2026quantumcodesleemetric}. For such noise, the Hamming weight records only whether a qudit is affected and therefore discards the shift magnitude, whereas the Lee metric measures the cyclic displacement in \(\mathbb Z_q\) and distinguishes small from large shifts \cite{Lee1958properties}. Accordingly, the local Metropolis dynamics considered below takes a single-qudit \(X^{\pm1}\) or \(Z^{\pm1}\) as the elementary error, so one step changes the corresponding Pauli exponent only by one modulo \(q\),  rather than treating all nonzero single-qudit shifts on the same footing, as in standard Hamming-type qudit error models \cite{Anwar2014fast}. Quantum codes in this Lee-metric setting were studied in Ref.~\cite{Guo2026quantumcodesleemetric}, where increasing \(q\) was shown to qualitatively change thermal memory under local Metropolis dynamics, providing direct motivation for treating the local Hilbert-space dimension as a resource for thermal protection.

Our construction starts from the ordinary 3D \(\mathbb Z_q\) surface code, whose two logical operators have qualitatively different geometries \cite{Bullock2007qudit,Aloshious2019local,Lee2025ZNgeneralizations}. One logical operator is membrane-like and has a growing energy barrier under local errors, whereas the conjugate logical operator is string-like and remains thermally unstable \cite{Castelnovo2008topological,Nussinov2008autocorrelations}. Our modification directly targets this string-like logical operator. We keep the support of every stabilizer unchanged and leave the Pauli powers in the other two spatial directions exactly as in the ordinary 3D surface code, modifying only the powers along the direction of the string-like logical operator while keeping the code translationally invariant in the bulk. This modification makes the Pauli powers of the string-like logical operator follow the sequence \(1,2,4,8,\ldots\) modulo \(q\). We refer to the resulting code as the weighted 3D surface code. This modification does not change the error support in the Hamming metric, but in the Lee metric, the growing local Hilbert-space dimension can directly change the thermal stability of the string-like sector.

Our main result is that the weighted 3D surface code is self-correcting at sufficiently low temperature with growing $q$. For \(N\) physical qudits and odd \(q=\Theta(N)\), we prove that the memory times of both error sectors scale as \(\exp[\Omega(N^{1/3})]\). Measured by Hilbert-space dimension, \(N\) \(q\)-level qudits with \(q=\Theta(N)\) correspond to \(\Theta(N\log N)\) qubits, so the growing local dimension introduces only a logarithmic multiplicative overhead in this counting. Our numerical results further suggest a more general scaling \(\log t_{\mathrm{mem}}\sim \min\{q,N^{1/3}\}\) when $q \gg \log N$. In particular, for \(q=N^\gamma\) with any fixed \(\gamma>0\), \(t_{\mathrm{mem}}\) scales as $\exp[\Theta(N^{\min(\gamma,1/3)})]$.

\section{The construction}

We first define the following chain complex over \(\mathbb Z\), and then reduce its boundary maps modulo \(q\) to obtain a CSS code over \(\mathbb Z_q\) \cite{Calderbank1996good,Steane1996error,Hostens2005stabilizer}, where \(q\) is odd and \(L\) is the linear system size. The construction uses two ordinary
open-boundary repetition-code complexes in the $x$ and $y$ directions and one helical repetition-code complex, introduced in Ref.~\cite{Guo2026quantumcodesleemetric}, in the \(z\) direction,
\begin{equation} \label{eq:chaincomplex1}
\begin{aligned}
A_\bullet^x &: \quad 0 \longrightarrow \mathbb{Z}^{L-1}
\xrightarrow{R_L^T} \mathbb{Z}^L \longrightarrow 0, \\
A_\bullet^y &: \quad 0 \longrightarrow \mathbb{Z}^{L-1}
\xrightarrow{R_L^T} \mathbb{Z}^L \longrightarrow 0, \\
B_\bullet^z &: \quad 0 \longrightarrow \mathbb{Z}^L
\xrightarrow{K_L} \mathbb{Z}^{L-1} \longrightarrow 0,
\end{aligned}
\end{equation}
where
\begin{equation}
(R_L)_{j,k}=\delta_{j,k}-\delta_{j+1,k},
\qquad
(K_L)_{j,k}=-2\delta_{j,k}+\delta_{j+1,k}
\end{equation}
with $1\leq j\leq L-1$ and $1\leq k\leq L$.

The three-dimensional chain complex is their tensor product \cite{Tillich2014quantum,Zeng2019higher,Bravyi2014homological},
\begin{align}
&C_\bullet=A_\bullet^x\otimes A_\bullet^y\otimes B_\bullet^z, \notag
\\
&0\longrightarrow C_3\xrightarrow{\partial_3}C_2
\xrightarrow{\partial_2}C_1\xrightarrow{\partial_1}C_0\longrightarrow 0.
\end{align}
Physical qudits are associated with the basis elements of $C_1\otimes\mathbb{Z}_q$. After reducing the boundary
maps modulo $q$, the CSS parity-check matrices are
\begin{equation}
H_X=\partial_1 \pmod q,
\qquad
H_Z=\partial_2^T \pmod q.
\end{equation}
The stabilizers commute because $H_XH_Z^T=\partial_1\partial_2=0$. The bulk stabilizer types are shown in Fig.~\ref{fig:bulk-stabilizers}.

For \(a\in\{0,\ldots,q-1\}\), define its Lee weight as $|a|_L=\min\{a,q-a\}$ \cite{Lee1958properties}, and extend the Lee weight additively to vectors. The row and column Lee weights of both \(H_X\) and \(H_Z\) are bounded by constants independent of \(L\) and \(q\), so the code is Lee-LDPC \cite{Guo2026quantumcodesleemetric}.

\begin{figure}[t]
\centering
\begingroup
\begin{tikzpicture}[
    x={(1cm,0cm)},
    y={(0.70cm,0.45cm)},
    z={(0cm,1cm)},
    line cap=round,
    line join=round,
    edge/.style={draw=black, line width=0.8pt},
    face/.style={edge, fill=black!3},
    weight/.style={
        font=\footnotesize, fill=none,
        inner xsep=1.2pt, inner ysep=0.6pt
    },
    wbottom/.style={below=2pt},
    wright/.style={left=3pt},
    wtop/.style={below=3pt},
    wleft/.style={right=3pt},
    heading/.style={font=\small},
    plane/.style={font=\footnotesize}
]

\def\arm{1.50}
\def\L{1.40}
\def\gap{0.40}

\def\drawface#1#2#3#4{%
    \filldraw[face] (A)--(B)--(C)--(D)--cycle;
    \foreach \P in {A,B,C,D}
        \fill (\P) circle[radius=1pt];
    \path
        (A)--node[weight,wbottom] {$#1$} (B)
           --node[weight,wright] {$#2$} (C)
           --node[weight,wtop] {$#3$} (D)
           --node[weight,wleft] {$#4$} (A);
}

\coordinate (O)  at (0,0,0);
\coordinate (Xp) at (\arm,0,0);
\coordinate (Xm) at (-\arm,0,0);
\coordinate (Yp) at (0,\arm,0);
\coordinate (Ym) at (0,-\arm,0);
\coordinate (Zp) at (0,0,\arm);
\coordinate (Zm) at (0,0,-\arm);

\foreach \P in {Xp,Xm,Yp,Ym,Zp,Zm}{
    \draw[edge] (O)--(\P);
    \fill (\P) circle[radius=1pt];
}
\fill (O) circle[radius=1.3pt];

\path (O)--node[pos=0.58,weight,below=3pt] {$1$} (Xp);
\path (O)--node[pos=0.58,weight,above=3pt] {$-1$} (Xm);
\path (O)--node[pos=0.72,weight,above=3pt] {$1$} (Yp);
\path (O)--node[pos=0.72,weight,below=3pt] {$-1$} (Ym);
\path (O)--node[pos=0.58,weight,left=3pt] {$1$} (Zp);
\path (O)--node[pos=0.58,weight,right=3pt] {$-2$} (Zm);

\node[plane, anchor=west,  xshift=3pt] at (Xp) {$x$};
\node[plane, anchor=west,  xshift=3pt] at (Yp) {$y$};
\node[plane, anchor=south, yshift=3pt] at (Zp) {$z$};

\node[heading] at (0,0,2.30) {$X$ stabilizer};
\node[heading] at (4.29,0,2.30) {$Z$ stabilizers};

\begin{scope}[shift={(2.90,0,-1.215)}]

    \begin{scope}
        \coordinate (A) at (0,0,0);
        \coordinate (B) at (\L,0,0);
        \coordinate (C) at (\L,0,\L);
        \coordinate (D) at (0,0,\L);
        \drawface{-1}{-1}{+2}{1}
        \path (A)--node[midway, below=18pt, plane] {$xz$} (B);
    \end{scope}

    \begin{scope}[
        yshift=\gap cm,
        wright/.style={right=3pt},
        wtop/.style={above=2pt},
        wleft/.style={left=3pt}
    ]
        \coordinate (A) at (0,0,\L);
        \coordinate (B) at (\L+0.23,0,\L);
        \coordinate (C) at (\L+0.23,0.92*\L,\L);
        \coordinate (D) at (0,0.92*\L,\L);
        \drawface{1}{1}{-1}{-1}
        \node[plane,anchor=south east,xshift=-3pt,yshift=3pt]
            at (D) {$xy$};
    \end{scope}

    \begin{scope}[xshift=\gap cm,wright/.style={right=3pt}]
        \coordinate (A) at (\L,0,0);
        \coordinate (B) at (\L,\L,0);
        \coordinate (C) at (\L,\L,\L);
        \coordinate (D) at (\L,0,\L);
        \drawface{-1}{-1}{+2}{1}
        \path (A)--node[midway, below=18pt, plane] {$yz$} (B);
    \end{scope}

\end{scope}
\end{tikzpicture}
\endgroup
\caption{Bulk stabilizers of the weighted 3D surface code.
The vertex-centered \(X\) stabilizer has coefficients
\((+x,-x,+y,-y,+z,-z)=(1,-1,1,-1,1,-2)\).
For $Z$ stabilizers, the \(xy\) plaquette has the oriented edge pattern
\((1,1,-1,-1)\).
For each vertical \(xz\) or \(yz\) plaquette, the bottom, right,
top and left edges carry \((-1,-1,+2,1)\), respectively.
All panels use the same coordinate projection.}
\label{fig:bulk-stabilizers}
\end{figure}

The only nonzero homology groups of the three one-dimensional factors are
\begin{equation}
H_0(A_\bullet^x;\mathbb{Z})
\cong
H_0(A_\bullet^y;\mathbb{Z})
\cong
\mathbb{Z},
\qquad
H_1(B_\bullet^z;\mathbb{Z})
\cong
\mathbb{Z},
\end{equation}
where $H_1(B_\bullet^z;\mathbb{Z})$ is generated by the vector
\begin{equation}
u_L
=
\left(
1,2,2^2,\ldots,2^{L-1}
\right)^T.
\end{equation}
The K\"unneth formula~\cite{Hatcher2002algebraic} therefore gives
\begin{equation}
H_1(C_\bullet;\mathbb{Z})
\cong
H_0(A_\bullet^x;\mathbb{Z})
\otimes
H_0(A_\bullet^y;\mathbb{Z})
\otimes
H_1(B_\bullet^z;\mathbb{Z})
\cong
\mathbb{Z},
\end{equation}
with no torsion. By the universal coefficient theorem~\cite{Hatcher2002algebraic}, after tensoring the integral complex with $\mathbb{Z}_q$,
\begin{equation}
H_1\left(
C_\bullet\otimes_{\mathbb{Z}}\mathbb{Z}_q
\right)
\cong
\mathbb{Z}_q,
\end{equation}
so the code encodes a single logical qudit.

The corresponding logical operators are obtained by reducing the following integer representatives
modulo $q$. For any vertical column, label its $L$ $z$-oriented edges by $e_1,\ldots,e_L$ in increasing $z$ order.
Let $\Sigma_{xy}$ be the $xy$ plane crossing $e_1$, and let $E_z(\Sigma_{xy})$ denote the set of all $z$-oriented edges crossing
this plane. A pair of logical generators can be represented by
\begin{subequations}
\begin{align}
\ell_Z=\sum_{k=1}^{L}2^{k-1}e_k,
\end{align}
\begin{align}
\ell_X=\sum_{e\in E_z(\Sigma_{xy})}e.
\end{align}
\end{subequations}
They overlap only on $e_1$, and therefore $\langle \ell_X,\ell_Z\rangle=1$, so their reductions modulo $q$ are the generators of all logical operators.

\section{Main result}

We consider the discrete-time single-site Metropolis dynamics \cite{Metropolis1953equation} separately in
the two error sectors, with energies
\begin{equation}
E_Z(z)=|H_X z|_L,
\qquad
E_X(x)=|H_Z x|_L.
\end{equation}
At each step, one physical qudit is selected and its error coordinate is proposed to change by $+1$ or $-1$, so the energy change of every elementary update is bounded by a constant independent of \(L\) and \(q\). A move with energy change $\Delta E$ is accepted with probability
\begin{equation}
A(\Delta E)=\min\{1,e^{-\beta\Delta E}\}.
\end{equation}
The dynamics is reversible with respect to the corresponding Gibbs distribution.
The discrete-time formulation is used only for convenience; the same single-site updates can be implemented as a continuous-time local Metropolis Lindbladian, with only a polynomial rescaling of time that does not affect the exponential memory-time scaling.

We define the \(Z\)- and \(X\)-sector memory times as follows. For each sector, we first construct a bottleneck separating the regions associated with distinct codewords, and then define the decoder by assigning each region to its codeword. Decoding can fail only after the trajectory has crossed the corresponding bottleneck. The \(Z\)-sector dynamics starts from an arbitrary codeword, whereas the \(X\)-sector dynamics starts from the thermal encoding of the corresponding codeword. The memory time is the first time at which this failure probability reaches \(1/4\).

We prove that there exist constants \(c,c_Z,c_X>0\) and \(\beta_c<\infty\), independent of \(L\) and \(q\), such that for every \(\beta>\beta_c\), all sufficiently large \(L\), and odd \(q\) satisfying
\begin{equation}
q\geq c L^3,
\end{equation}
the memory times of both error sectors satisfy
\begin{equation}
t_{\mathrm{mem}}^Z \geq e^{c_Z L},
\qquad
t_{\mathrm{mem}}^X \geq e^{c_X L}.
\end{equation}
Since the number of physical qudits is $N=\Theta(L^3)$,
both times scale as $e^{\Omega(N^{1/3})}$. Thus, under the single-site Metropolis dynamics considered here,
the weighted 3D surface code is a self-correcting quantum memory.

\noindent\textit{Proof sketch}

The rigorous proofs are given in Appendices \ref{app:Z} and \ref{app:X}; here we only provide a brief overview of the main ideas.

The \(X\)-error sector is essentially the same as in the ordinary 3D surface code. If every connected \(Z\)-syndrome cluster is sufficiently small, it can be corrected locally, so a change of the logical \(X\) sector must pass through a configuration containing a connected cluster of size \(\Omega(L)\). Up to polynomial factors in \(L\), the number of connected syndrome supports at this scale is bounded by $e^{cL}$ for some constant \(c>0\).
On the other hand, for any fixed connected support of size \(\alpha L\), summing over all nonzero \(\mathbb Z_q\) syndrome values gives a total thermal weight bounded by
\begin{equation} \label{eq:boltzmanweight}
\left(\frac{2}{e^\beta-1}\right)^{\alpha L}
=
\exp\!\left[
-\alpha L\log\!\left(\frac{e^\beta-1}{2}\right)
\right].  
\end{equation}
The exponent of $e^{cL}$ is fixed by the local geometry, whereas the decay rate in (\ref{eq:boltzmanweight}) increases with \(\beta\). Thus, at sufficiently low temperature, the Boltzmann suppression dominates the configurational entropy, with a critical inverse temperature of order one independent of \(q\). The same competition increasingly suppresses larger clusters, so configurations containing a connected cluster of size at least \(\alpha L\) have total Gibbs weight \(e^{-\Omega(L)}\), giving an exponentially long \(X\)-sector memory time.

For the \(Z\)-error sector, it is useful to first ignore reduction modulo \(q\). Along a vertical column, the logical \(Z\) operator has error exponents \(1,2,4,\ldots,2^{L-1}\). The Lee distance of the $Z$ sector is therefore exponentially large and reaching a different logical sector requires an exponential number of single-site updates.

For finite \(q\), this exponential growth in the Lee distance is cut off by reduction modulo \(q\): an alternative way to escape the logical sector is for an endpoint syndrome to approach the maximal Lee weight, of order \(q/2\). The rigorous proof makes these two escape routes precise by lifting the dynamics to \(\mathbb Z\) and surrounding each codeword by a cylinder. Reaching a cap of the cylinder requires at least \(2^{L-2}\) single-site updates, while reaching its side requires a syndrome coordinate with Lee weight of order \(q\). For \(q\ge cL^3\) and sufficiently low temperature, the total Boltzmann weight near the side is exponentially small, so either escape route requires an exponentially long time. After reducing modulo \(q\), the cylinders remain pairwise disjoint, and each cylinder is mapped injectively, so the dynamics also agrees with the lifted dynamics until the cylinder is exited, giving an exponentially long \(Z\)-sector memory time.

\section{Numerical results}

We first present numerical results supporting our analytical conclusions in the scaling regime covered by the rigorous bounds.
For each sector, we initialize the dynamics at the zero-error codeword and
record the first-passage time \(\tau_A\) to the corresponding bottleneck,
reporting
\begin{equation}
t_{\mathrm{mem}}^A(1/4)
=
\inf\{t:\Pr(\tau_A\leq t)\geq 1/4\}.
\end{equation}

For the \(Z\) sector, the bottleneck is the boundary of the cylinder: either the longitudinal coordinate \(T\) along \(\ell_Z\) (Appendix~\ref{app:Z}) exceeds half its logical period, or a syndrome coordinate reaches Lee weight \((q-3)/2\).
For the \(X\) sector, it is reached when the largest connected component of
the nonzero \(Z\) syndrome has size at least \(\lceil L/2\rceil\).
Right-censored trajectories are included using the Kaplan--Meier estimator \cite{Kaplan1958nonparametric}.

As shown in Fig.~\ref{fig:xz_memory}, at $\beta=2$ and $q\simeq L^3$, the memory times in both the $X$- and $Z$-error sectors grow exponentially with the linear system size $L$. For the system sizes included in the fits, we obtain
\begin{equation}
t_{\mathrm{mem}}^X\propto e^{0.360L},
\qquad
t_{\mathrm{mem}}^Z\propto e^{1.510L}.
\label{eq:numerical-memory-time-L}
\end{equation}
The two $X$-sector data points at the smallest $L$ are excluded from the fit. These numerical results are consistent with the main result.

\begin{figure}[t]
    \centering
    \includegraphics[width=0.97\linewidth]{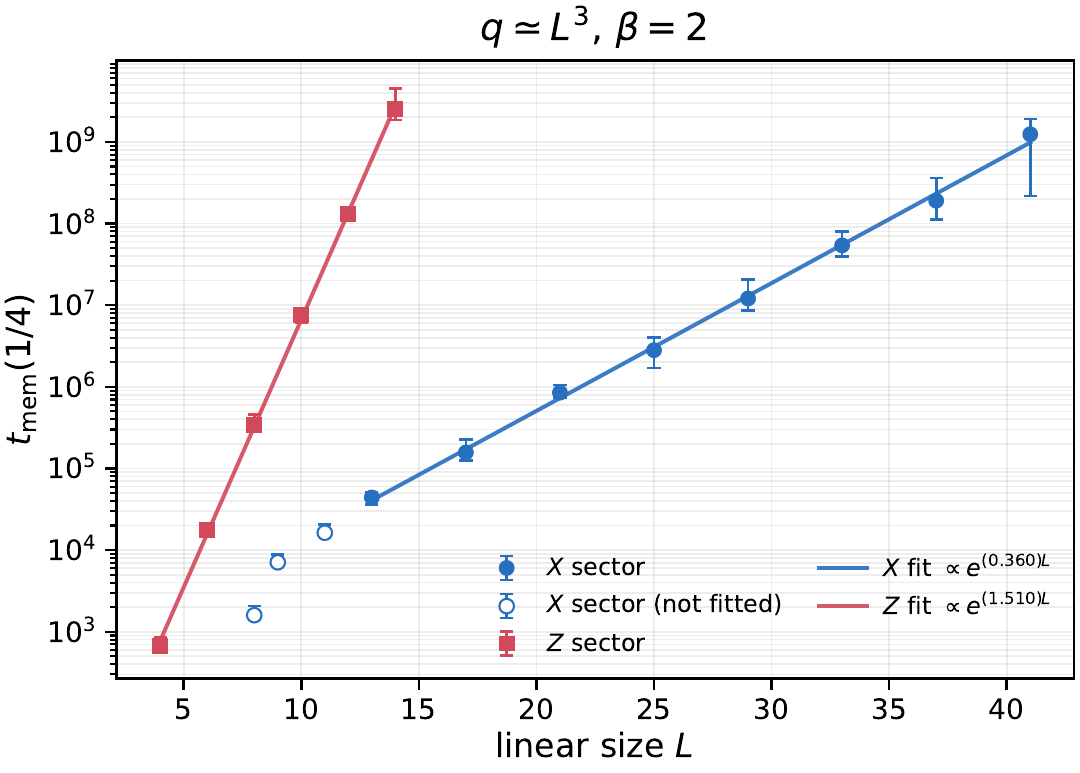}
    \caption{Memory times of the \(X\)- and \(Z\)-error sectors at \(\beta=2\) with \(q\simeq L^3\). Both sectors exhibit exponential growth with the linear system size \(L\). The solid lines show fits \(t_{\mathrm{mem}}^X(1/4)\propto e^{0.360L}\) and \(t_{\mathrm{mem}}^Z(1/4)\propto e^{1.510L}\); the open \(X\)-sector data points at small \(L\) are not included in the fit.}
    \label{fig:xz_memory}
\end{figure}

We further examine the dependence of the $Z$-sector memory time on the local dimension $q$. At $\beta=1$, we consider several families with $q\simeq a_\alpha L^\alpha$, where $\alpha=0.25,0.5,0.75,1$. As shown in Fig.~\ref{fig:z_power}, when the memory time is plotted directly as a function of $q$, the data for different $\alpha$ collapse onto a single curve and are well described by the global fit
\begin{equation}
t_{\mathrm{mem}}^Z\propto e^{0.477q}.
\label{eq:numerical-memory-time-q}
\end{equation}
This collapse suggests that, in this regime, the $Z$-sector memory time is controlled primarily by $q$ rather than by $L$.

In the proof in Appendix \ref{app:Z}, we used a very loose upper bound to control the entropic contribution in the $Z$ sector, which is also the main source of the stringent requirement on $q$ in the analytical result. The numerical results suggest that a much tighter bound should be possible. This behavior has a simple heuristic interpretation. As discussed in Appendix~\ref{app:Z-q}, the extensive entropic factor entering the rigorous bound may largely cancel after normalization, leaving only a polynomial number of choices for which syndrome coordinate approaches the maximal Lee weight. The corresponding entropy is therefore only \(O(\log N)\), whereas reaching this region costs an energy of order \(q\). This suggests that the side-escape mechanism remains suppressed already when \(q\gg\log N\), consistent with the numerical scaling in (\ref{eq:numerical-memory-time-q}).
Combining this with the \(e^{\Omega(L)}\) cap time suggests
\begin{equation}
\log t_{\mathrm{mem}}^Z\sim \min\{q,N^{1/3}\}.
\end{equation}
In particular, any polynomial growth \(q=N^\gamma\) would give a stretched-exponential memory time.

\begin{figure}[t]
    \centering
    \includegraphics[width=0.97\linewidth]{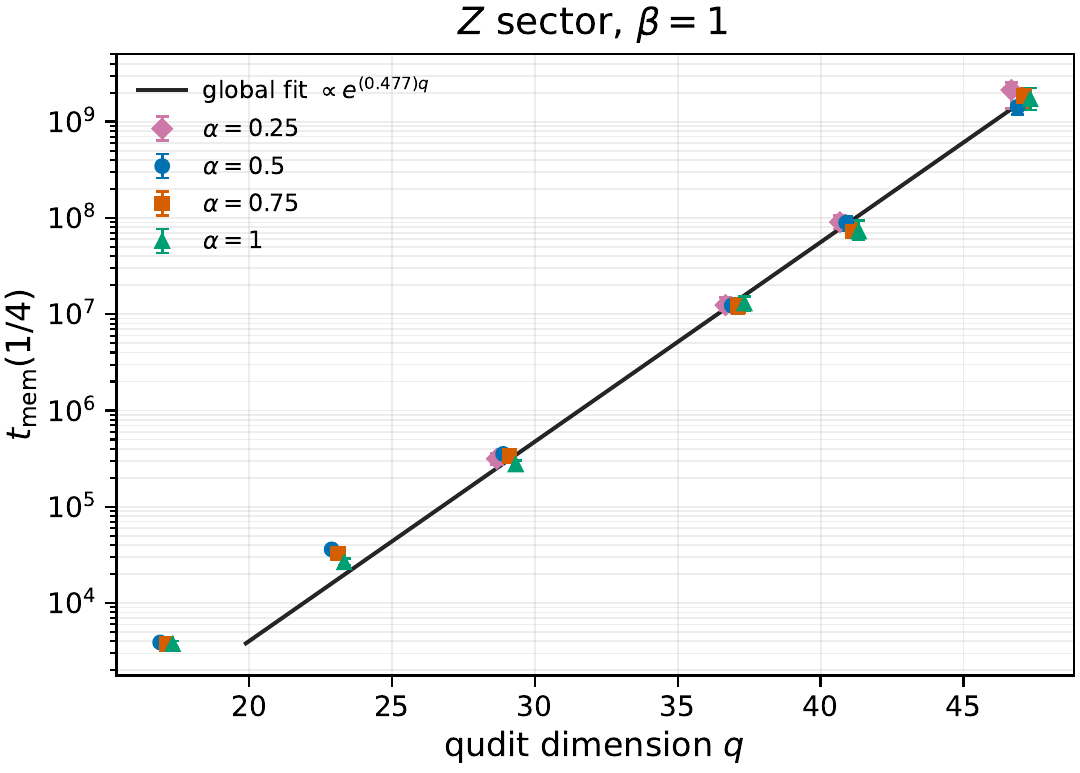}
    \caption{\(Z\)-sector memory time at \(\beta=1\) for \(q\simeq a_\alpha L^\alpha\), with \(\alpha=0.25,0.5,0.75,1\) and \(a_{0.25}=14\), \(a_{0.5}=6\), \(a_{0.75}=3\), and \(a_1=2\), plotted as a function of the qudit dimension \(q\). Despite the different values of \(\alpha\) and \(a_\alpha\), the data collapse onto a common curve, with a global fit \(t_{\mathrm{mem}}(1/4)\propto e^{0.477q}\), suggesting that the memory time is controlled by \(q\) in this regime.}
    \label{fig:z_power}
\end{figure}

\section{Discussion and Outlook}

Self-correcting quantum memories are usually studied at fixed local Hilbert-space dimension, while the code geometry, interaction structure, or dynamics is varied \cite{Terhal2015quantum,Brown2016quantum}. Our results demonstrate another possibility: allowing \(q\) to grow with system size can itself serve as a resource for enhancing the thermal stability of low-dimensional quantum memories. In our construction, this resource yields passive thermal protection in three dimensions without long-range interactions or continual active error correction. The code retains the local geometry and stabilizer support of the ordinary 3D surface code and is translation invariant in the bulk.

A natural question is whether growing local dimension can provide a similar route to self-correction in two dimensions. However, for the single-qudit \(X^{\pm1}\) and \(Z^{\pm1}\) shift errors considered here, the transference bounds of Ref.~\cite{Guo2026quantumcodesleemetric} imply that, for any quantum CSS Lee-LDPC code, at least one of the two error sectors has Lee distance \(O(N)\), uniformly in \(q\). This obstructs a direct two-dimensional extension of the mechanism used here. One possible direction is to ask whether growing local dimension can enable a two-dimensional memory stabilized by active quantum error correction with local Markovian dynamics.
A complementary direction is to relax the choice of elementary errors. The transference bounds above rely on the \(\mathbb Z_q\) Pauli algebra of the single-qudit shift errors, and Ref.~\cite{Guo2026quantumcodesleemetric} gives an explicit example with a different local error basis for which this restriction no longer applies and the code distance grows with \(q\). This suggests asking which physically motivated local error models can provide similar protection in two dimensions. More broadly, it would be useful to identify multilevel and bosonic platforms in which the required local stabilizer interactions and relevant local error processes arise naturally.

\begin{acknowledgments}
We thank Andrew Lucas for useful discussions.
This work is supported by the U.S. Department of Energy, Office of Science, Office of Nuclear
Physics, InQubator for Quantum Simulation (IQuS)\footnote{\url{https://iqus.uw.edu/}} under Award Number DOE (NP) Award
DE-SC0020970 via the program on Quantum Horizons\footnote{\url{https://science.osti.gov/np/Research/Quantum-Information-Science}}: QIS Research and Innovation for
Nuclear Science.
This manuscript was prepared with the assistance of ChatGPT, developed by OpenAI, and Claude, developed by Anthropic. These tools were used to improve clarity, generate figures, and assist in generating code for numerical calculations, in accordance with the author’s instructions. All
content, claims, and conclusions have been reviewed and
verified by the author to ensure accuracy and originality.

\end{acknowledgments}

\bibliography{thebib}

\clearpage
\onecolumngrid

\appendix
\setcounter{secnumdepth}{2}
\renewcommand{\thesubsection}{\thesection.\arabic{subsection}}
\makeatletter
\renewcommand{\p@subsection}{}
\makeatother

\section{Proof of $Z$-error sector} \label{app:Z}

In this appendix, we prove an exponential lower bound on the \(Z\)-sector memory time for the discrete-time single-site Metropolis dynamics initialized at an arbitrary codeword.
Our proof proceeds by constructing a bottleneck around each codeword that separates it from all other codewords. We then show that, under the single-site Metropolis dynamics, the probability of reaching this bottleneck within an exponentially long time is small for sufficiently large \(q\) and $L$.

We first lift the $Z$-error configurations to the integral chain complex. For any integer error class $[z]\in C_1(\mathbb{Z})/\im\partial_2$, the integer syndrome energy is defined by
\begin{equation}
\widetilde E_Z([z])
=
\|\partial_1z\|_1.
\end{equation}
On the integer error classes, we consider the same single-site \(\pm1\) Metropolis dynamics with energy \(\widetilde E_Z\).

Let $\chi_L\in C^1(\mathbb{Z})$ be supported on the $z$-oriented edges and assign
coefficient $2^{k-1}$ to every $z$-oriented edge in layer $k$,
\begin{equation}
(\chi_L)_{i,j,k}=2^{k-1},
\qquad
1\leq i,j,k\leq L.
\end{equation}
The purpose of $\chi_L$ is to measure displacement along the logical direction generated by \(\ell_Z\): since $\partial_2^T\chi_L=0$, for any error class $[z]$, we can define the quantity\begin{equation}
T([z])=\langle\chi_L,z\rangle,
\end{equation}
which is invariant when adding any \(Z\) stabilizer to \(z\). If we apply one logical \(Z\), \(\ell_Z\), \(T\) is changed by
\begin{equation}
P_L:=T(\ell_Z)
=
\sum_{k=1}^{L}4^{k-1}
=
\frac{4^L-1}{3}.
\end{equation}
On the other hand, a single-site \(\pm1\) update changes \(T\) by at most the largest coefficient of \(\chi_L\),
\begin{equation}
\Delta_L:=2^{L-1}.
\end{equation}

Let
\begin{equation}
S_q:=\frac{q-1}{2}.
\end{equation}
Define an integer ``cylinder''
\begin{equation}
\widetilde{\mathcal C}
=
\left\{
[z]:
-\frac{P_L}{2}
\leq
T([z])
<
\frac{P_L}{2},
\quad
|(\partial_1z)_a|
\leq
S_q
\text{ for every syndrome coordinate }a
\right\}.
\end{equation}
We call \(\widetilde{\mathcal C}\) a cylinder because \(T\) serves as the longitudinal coordinate, with the two bounds on \(T\) defining the caps of the boundary, while the coordinatewise syndrome bounds define the side.

We now estimate the time needed to reach the boundary of the cylinder.
Starting from $T=0$, the two caps cannot be reached before
\begin{equation} \label{eq:cap-bound}
\tau_{\mathrm{cap}}
:=
\left\lceil
\frac{P_L}{2\Delta_L}
\right\rceil
\geq
2^{L-2},
\qquad
L\geq2.
\end{equation}
As for the side, because one update can change an individual syndrome coordinate by two, define
the side shell
\begin{equation}
\widetilde{\mathcal C}_{\mathrm{side}}
=
\left\{
[z]\in\widetilde{\mathcal C}:
S_q-1
\leq
|(\partial_1z)_a|
\leq
S_q
\text{ for at least one }a
\right\}.
\end{equation}
If an integer syndrome coordinate leaves $[-S_q,S_q]$ in one update, the
configuration immediately before that update must belong to
$\widetilde{\mathcal C}_{\mathrm{side}}$. Next, we estimate the time needed to reach $\widetilde{\mathcal C}_{\mathrm{side}}$. We first need the following lemma.

\begin{lem}[One state per integer syndrome] \label{lem:onestatepersyndrome}
If $[z],[z']\in\widetilde{\mathcal C}$ have the same integer syndrome, then
\begin{equation}
[z]=[z']
\qquad
\text{in }
C_1(\mathbb{Z})/\im\partial_2.
\end{equation}
\end{lem}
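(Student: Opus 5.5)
The plan is to take the difference $w=z-z'$, which satisfies $\partial_1 w=0$, so that $w$ represents a class in $H_1(C_\bullet;\mathbb{Z})=\ker\partial_1/\im\partial_2$. By the K\"unneth computation stated earlier, this homology group is $\mathbb{Z}$ with no torsion, generated by the class of $\ell_Z$. Hence there is an integer $m$ with $w=m\,\ell_Z+\partial_2 b$ for some $b\in C_2(\mathbb{Z})$. It then suffices to show that $m=0$, since in that case $[z]=[z']$ in $C_1(\mathbb{Z})/\im\partial_2$.

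To determine $m$, I would use the longitudinal coordinate $T$. Because $\partial_2^T\chi_L=0$, $T$ vanishes on $\im\partial_2$, and $T$ is linear. Therefore
\[
T([z])-T([z'])=T(w)=m\,T(\ell_Z)=m\,P_L .
\]
Both $[z]$ and $[z']$ lie in $\widetilde{\mathcal C}$, so each value of $T$ lies in the half-open interval $[-P_L/2,P_L/2)$. Their difference therefore satisfies $|T([z])-T([z'])|<P_L$. Since $P_L>0$, this forces $|m|<1$, so $m=0$.

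The one point that needs care is that $\ell_Z$ really generates $H_1(C_\bullet;\mathbb{Z})$ over $\mathbb{Z}$, and not merely rationally. I would check this from the K\"unneth isomorphism. A generator is $1_x\otimes 1_y\otimes u_L$, where $1_x$ and $1_y$ are the classes of a single vertex in $H_0(A^x_\bullet)$ and $H_0(A^y_\bullet)$. Since $\operatorname{coker}R_L^T\cong\mathbb{Z}$ is generated by any basis vertex, this element is exactly the column representative $\ell_Z$. One could also argue directly: $\langle\ell_X,\ell_Z\rangle=1$ and $\ell_X$ is a cocycle, so pairing with $\ell_X$ gives a surjection $H_1\to\mathbb{Z}$. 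Because $H_1\cong\mathbb{Z}$, this surjection is an isomorphism and $\ell_Z$ maps to $1$, so $\ell_Z$ is a generator.

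Note that the syndrome-side constraint of $\widetilde{\mathcal C}$ is not used here; only the cap constraint matters.
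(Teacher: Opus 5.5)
Your proof is correct and follows the paper's argument exactly. As in the paper, you write $z-z'=M\ell_Z+\partial_2 a$ using $H_1(C_\bullet;\mathbb{Z})\cong\mathbb{Z}$ generated by $\ell_Z$, apply $T$ to get $MP_L$, and use the half-open interval of length $P_L$ to force $M=0$; your extra check that $\ell_Z$ generates $H_1$ over $\mathbb{Z}$ (via the K\"unneth cross product, or via pairing with the cocycle $\ell_X$) fills in a step the paper only asserts.
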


\begin{proof}
Equal syndrome implies
\begin{equation}
z-z'\in\ker\partial_1.
\end{equation}
Since $H_1(C_\bullet;\mathbb{Z})\cong\mathbb{Z}$ is generated by $\ell_Z$,
there exist $M\in\mathbb{Z}$ and $a\in C_2(\mathbb{Z})$ such that
\begin{equation}
z-z'
=
M\ell_Z+\partial_2a.
\end{equation}
Applying $T$ gives
\begin{equation}
T([z])-T([z'])
=
MP_L.
\end{equation}
Both values lie in the same half-open interval of length $P_L$, so $M=0$.
Hence $z-z'\in\im\partial_2$.
\end{proof}

\begin{lem}[Side-hitting bound] \label{lem:Side-hitting-bound}
Let
\begin{equation}
m_L=\dim C_0=L^2(L-1),
\end{equation}
and let
\begin{equation}
\widetilde{W}_Z(x)=e^{-\beta\widetilde{E}_Z(x)}
\end{equation}
denote the Boltzmann weight. For every integer $\tau<\tau_{\mathrm{cap}}$ satisfying
\begin{equation} \label{eq:side-time-condition}
\tau\leq\frac{1}{4\widetilde{W}_Z(\widetilde{\mathcal C}_{\mathrm{side}})},
\end{equation}
the probability that $\widetilde{\mathcal C}_{\mathrm{side}}$ is visited during the first $\tau$ steps is at most $1/4$. In particular, it is sufficient that
\begin{equation} \label{eq:side-time-explicit}
\tau\leq
\frac{
\exp\left[
\beta(S_q-1)
-(m_L-1)\log\left(\frac{e^\beta+1}{e^\beta-1}\right)
\right]
}{
8m_L(1+e^{-\beta})
}.
\end{equation}

\end{lem}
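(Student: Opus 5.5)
The plan is to compare the Metropolis chain with its stationary measure through reversibility, and then bound $\widetilde W_Z(\widetilde{\mathcal C}_{\mathrm{side}})$ by counting syndromes, which Lemma~\ref{lem:onestatepersyndrome} allows.

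\emph{Step 1: a finite chain.} The integer chain lives on an infinite state space, so I first replace it by the chain restricted to $\widetilde{\mathcal C}$: any proposed move leaving $\widetilde{\mathcal C}$ is rejected. The Metropolis acceptance rule with symmetric proposals makes this restricted chain reversible with respect to $\widetilde W_Z$ restricted to $\widetilde{\mathcal C}$.

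\emph{Step 2: the two chains agree long enough.} I will argue that, with probability one, the restricted chain and the original chain coincide up to the first visit to $\widetilde{\mathcal C}_{\mathrm{side}}$, provided $\tau<\tau_{\mathrm{cap}}$. A cap cannot be reached before $\tau_{\mathrm{cap}}$ by \eqref{eq:cap-bound}. Exiting through the side requires first being in $\widetilde{\mathcal C}_{\mathrm{side}}$, as observed just before Lemma~\ref{lem:onestatepersyndrome}. It therefore suffices to bound the visit probability for the restricted chain.

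\emph{Step 3: the reversibility bound.} The starting codeword $x_0$ has zero syndrome, so $\widetilde W_Z(x_0)=1$. Reversibility gives, for every $t$ and $y$,
\begin{equation*}
P^t(x_0,y)=\frac{\widetilde W_Z(y)}{\widetilde W_Z(x_0)}\,P^t(y,x_0)\le \widetilde W_Z(y).
\end{equation*}
A union bound over times $t=0,\dots,\tau-1$ then gives
\begin{equation*}
\Pr\bigl[\text{visit }\widetilde{\mathcal C}_{\mathrm{side}}\text{ within }\tau\text{ steps}\bigr]\le \tau\,\widetilde W_Z(\widetilde{\mathcal C}_{\mathrm{side}}).
\end{equation*}
This is at most $1/4$ under \eqref{eq:side-time-condition}. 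The case of an arbitrary codeword is identical: translating by the codeword preserves both energies and the cylinder structure, up to a shift of $T$ by a multiple of $P_L$.

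\emph{Step 4: bounding the side weight.} To obtain \eqref{eq:side-time-explicit}, I use Lemma~\ref{lem:onestatepersyndrome}. It makes the map $[z]\mapsto\partial_1 z$ injective on $\widetilde{\mathcal C}$, so $\widetilde W_Z(\widetilde{\mathcal C}_{\mathrm{side}})$ is at most $\sum_s e^{-\beta\|s\|_1}$. The sum runs over integer vectors $s\in\mathbb Z^{m_L}$ with some coordinate satisfying $|s_a|\ge S_q-1$. I relax the image constraint of $\partial_1$, which only enlarges the sum. I then choose the distinguished coordinate $a$ in at most $m_L$ ways.
\begin{itemize}
\item The distinguished coordinate contributes at most $2\sum_{k\ge S_q-1}e^{-\beta k}=2e^{-\beta(S_q-1)}/(1-e^{-\beta})$.
\item Each remaining coordinate contributes the full one-dimensional sum $\sum_{k\in\mathbb Z}e^{-\beta|k|}=(e^\beta+1)/(e^\beta-1)$.
\end{itemize}
Using $1/(1-e^{-\beta})=\frac{e^\beta+1}{e^\beta-1}\cdot\frac{1}{1+e^{-\beta}}$ to rewrite the first factor, this gives
\begin{equation*}
\widetilde W_Z(\widetilde{\mathcal C}_{\mathrm{side}})\le \frac{2m_L}{1+e^{-\beta}}\,e^{-\beta(S_q-1)}\left(\frac{e^\beta+1}{e^\beta-1}\right)^{m_L}.
\end{equation*}
Substituting into \eqref{eq:side-time-condition} yields a condition of the form \eqref{eq:side-time-explicit}. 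Getting exactly the stated exponent $m_L-1$ needs one more free factor of $\frac{e^\beta+1}{e^\beta-1}$ to be absorbed. I would first try to obtain it from the image constraint of $\partial_1$: one linear relation among the syndrome coordinates, from the cokernel $H_0$ of the $z$-direction factor, should let one coordinate be fixed by the others. Otherwise I would adjust the bookkeeping of the constants.

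\emph{Main obstacle.} The delicate point is Step 2: justifying the passage to the finite, cylinder-restricted reversible chain, so that the bound $P^t(x_0,y)\le \widetilde W_Z(y)$ legitimately applies before any exit. Once that is settled, the counting in Step 4 is routine.
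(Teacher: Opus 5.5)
Your reversibility argument in Step 3 is the paper's, and it correctly proves the first assertion: under \eqref{eq:side-time-condition} the visit probability is at most $1/4$. Steps 1--2 are valid but not needed. By induction on $t$, one-step detailed balance gives $\widetilde W_Z(x_0)P^t(x_0,y)=\widetilde W_Z(y)P^t(y,x_0)$ for any countable chain that is reversible with respect to a positive measure, even a non-normalizable one. The paper therefore applies this identity directly to the infinite integer chain, so what you call the main obstacle is not one.

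The genuine gap is in Step 4: as written, you do not prove the explicit condition \eqref{eq:side-time-explicit}. That condition requires
$\widetilde W_Z(\widetilde{\mathcal C}_{\mathrm{side}})\le 2m_L(1+e^{-\beta})e^{-\beta(S_q-1)}\bigl(\tfrac{e^\beta+1}{e^\beta-1}\bigr)^{m_L-1}$.
You summed the distinguished coordinate over the whole tail $|s_a|\ge S_q-1$, obtaining $2e^{-\beta(S_q-1)}/(1-e^{-\beta})$. Your final bound therefore exceeds the required one by the factor $1/(1-e^{-2\beta})>1$, and the $\tau$ you can certify is smaller than the stated one by $1-e^{-2\beta}$.

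Your proposed repair cannot work. $K_L$ is surjective over $\mathbb Z$: given $w$, set $v_1=0$ and $v_{j+1}=w_j+2v_j$. Hence $H_0(B^z_\bullet;\mathbb Z)=0$. In fact $\partial_1$ restricted to the $z$-oriented edges acts as $\pm\,\mathrm{id}\otimes\mathrm{id}\otimes K_L$ and is already onto $C_0\cong\mathbb Z^{m_L}$. So the image constraint is vacuous, and there is no linear relation among syndrome coordinates to exploit.

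The missing observation is simply that $\widetilde{\mathcal C}_{\mathrm{side}}\subseteq\widetilde{\mathcal C}$. The distinguished coordinate therefore satisfies $S_q-1\le|s_a|\le S_q$ and takes only the four values $\pm(S_q-1)$ and $\pm S_q$. Its contribution is $2e^{-\beta(S_q-1)}+2e^{-\beta S_q}=2(1+e^{-\beta})e^{-\beta(S_q-1)}$. Multiply this by $\bigl(\tfrac{e^\beta+1}{e^\beta-1}\bigr)^{m_L-1}$ for the remaining coordinates and by $m_L$ for the choice of $a$. This gives exactly the required bound, and substituting it into \eqref{eq:side-time-condition} yields \eqref{eq:side-time-explicit} with the stated constants and exponent $m_L-1$.
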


\begin{proof}
By Lemma~\ref{lem:onestatepersyndrome}, each integer syndrome labels at most one state in $\widetilde{\mathcal C}$.
Fix one syndrome coordinate $a$. The total Boltzmann weight of the shell values $\pm(S_q-1)$ and $\pm S_q$
is bounded by
\begin{equation}
2e^{-\beta(S_q-1)}+2e^{-\beta S_q}
=
2(1+e^{-\beta})e^{-\beta(S_q-1)}.
\label{eq:side-shell-weight}
\end{equation}

For every other syndrome coordinate, enlarging the allowed interval to all of $\mathbb{Z}$, we get
\begin{equation}
\sum_{r\in\mathbb{Z}}e^{-\beta|r|}
=
1+2\sum_{r=1}^{\infty}e^{-\beta r}
=
\frac{e^\beta+1}{e^\beta-1}.
\label{eq:side-geometric-sum}
\end{equation}

Multiplying these factors and taking a union bound over the $m_L$ possible shell coordinates gives
\begin{equation}
\widetilde{W}_Z(\widetilde{\mathcal C}_{\mathrm{side}})
:=
\sum_{[z]\in\widetilde{\mathcal C}_{\mathrm{side}}}
e^{-\beta\widetilde{E}_Z([z])}
\leq
2m_L(1+e^{-\beta})e^{-\beta(S_q-1)}
\left(
\frac{e^\beta+1}{e^\beta-1}
\right)^{m_L-1}.
\label{eq:side-weight-bound}
\end{equation}

Reversibility of the dynamics and $\widetilde{W}_Z(0)=1$ imply, for every $t\geq0$,
\begin{equation}
P^t(0,x)
=
\widetilde{W}_Z(x)P^t(x,0)
\leq
\widetilde{W}_Z(x).
\label{eq:side-reversibility}
\end{equation}

Consequently, for every integer $\tau<\tau_{\mathrm{cap}}$,
\begin{equation}
\Pr_0\!\left[
\widetilde{\mathcal C}_{\mathrm{side}}
\text{ is visited during the first $\tau$ steps}
\right]
\leq
\tau\widetilde{W}_Z(\widetilde{\mathcal C}_{\mathrm{side}}).
\label{eq:side-hitting-probability}
\end{equation}
Thus \eqref{eq:side-time-condition} makes this probability at most $1/4$, and substituting
\eqref{eq:side-weight-bound} gives \eqref{eq:side-time-explicit}.
\end{proof}

Combining the cap bound (\ref{eq:cap-bound}) with Lemma \ref{lem:Side-hitting-bound}, the \(1/4\) exit time from the integer cylinder satisfies
\begin{equation}
\widetilde t_Z(1/4)
\ge
\min\left\{
2^{L-2},
\,
\frac{
\exp\!\left[
\beta(S_q-1)
-
(m_L-1)\log\!\left(
\frac{e^\beta+1}{e^\beta-1}
\right)
\right]
}{
8m_L(1+e^{-\beta})
}
\right\}.
\end{equation}

We now reduce the cylinder modulo $q$. Let \(\pi_q\) denote reduction modulo \(q\), followed by identifying configurations that differ by \(Z\) stabilizers.

\begin{lem}[Injectivity of \(\pi_q\)] \label{lem:injectivity}
The map $\pi_q$ is injective on $\widetilde{\mathcal C}$, and for every
$[z]\in\widetilde{\mathcal C}$,
\begin{equation}
E_Z(\pi_q[z])
=
\widetilde E_Z([z]).
\end{equation}
Moreover, consider the modulo-$q$ single-site Metropolis chain within $\mathcal C
=
\pi_q(\widetilde{\mathcal C})$. Every single-site update whose integer initial state lies in \(\widetilde{\mathcal C}\setminus\widetilde{\mathcal C}_{\mathrm{side}}\) and whose final state remains in \(\widetilde{\mathcal C}\) is mapped by \(\pi_q\) to the corresponding modulo-\(q\) single-site update with the same energy change and the same Metropolis acceptance probability.
\end{lem}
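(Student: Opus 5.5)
The plan is to handle three things in turn: the energy identity, injectivity, and the correspondence of single-site updates. The key input for all of them is that every syndrome coordinate of a state in $\widetilde{\mathcal C}$ lies in $[-S_q,S_q]=[-(q-1)/2,(q-1)/2]$, a complete set of residues modulo $q$ on which reduction is a bijection onto $\mathbb Z_q$ that preserves absolute value as Lee weight.

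\emph{Energy identity.} Since $H_X=\partial_1 \bmod q$, the syndrome of the reduced error is
\begin{equation}
H_X(z \bmod q)=(\partial_1 z)\bmod q .
\end{equation}
For each coordinate $a$ with $|(\partial_1 z)_a|\le S_q$ we have $|(\partial_1 z)_a \bmod q|_L=|(\partial_1 z)_a|$. Summing over $a$ gives $E_Z(\pi_q[z])=\widetilde E_Z([z])$. The energy is also well defined on classes, because adding $\im\partial_2$ does not change $\partial_1 z$, and adding $\im H_Z^T$ does not change $H_X z$.

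\emph{Injectivity.} Suppose $[z],[z']\in\widetilde{\mathcal C}$ and $\pi_q[z]=\pi_q[z']$. Then the two integer syndromes agree modulo $q$. Each pair of corresponding coordinates lies in $[-S_q,S_q]$, and there congruence modulo $q$ forces equality, so the integer syndromes coincide. Lemma~\ref{lem:onestatepersyndrome} then gives $[z]=[z']$ in $C_1(\mathbb Z)/\im\partial_2$. I do not need to analyze the mod-$q$ kernel, or the fact that $\ell_Z \bmod q$ has period related to the order of $2$ modulo $q$: equal syndromes plus membership in the cylinder suffice.

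\emph{Updates.} A single-site update adds $\pm e$ to $z$, where $e$ is a qudit basis vector, and the same local move $\pm e$ acts on $z \bmod q$. So $\pi_q$ intertwines the proposals, with the same proposal probabilities. Take an initial state $[z]\in\widetilde{\mathcal C}\setminus\widetilde{\mathcal C}_{\mathrm{side}}$ whose final state $[z\pm e]$ also lies in $\widetilde{\mathcal C}$. By the energy identity applied to both endpoints, the mod-$q$ energy change equals the integer one, hence the Metropolis acceptance probability $A(\Delta E)$ is the same.

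The main subtlety is consistency of the chain on the image. I need that the mod-$q$ move from $\pi_q[z]$ lands exactly on $\pi_q[z\pm e]$, and not on some other preimage, which holds since reduction is a homomorphism. Injectivity then ensures the lifted and reduced trajectories stay in bijection while inside $\mathcal C$.

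The exclusion of $\widetilde{\mathcal C}_{\mathrm{side}}$ appears to be needed only for the complementary situation, where a move would leave $\widetilde{\mathcal C}$. There an integer coordinate could jump past $\pm S_q$ and wrap around modulo $q$ to a small Lee weight, so the energy changes would differ. Away from the side shell, any single update changes each coordinate by at most $2$ and so stays within $[-S_q,S_q]$; no wrap-around occurs. I would remark on this, although the stated claim, with both endpoints in $\widetilde{\mathcal C}$, already follows from the energy identity alone.
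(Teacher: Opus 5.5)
Your proposal is correct and follows essentially the same route as the paper. Equal reductions force equal mod-$q$ syndromes, and these lift to equal integer syndromes because every coordinate lies in $[-S_q,S_q]$, so Lemma~\ref{lem:onestatepersyndrome} gives injectivity; the coordinatewise identity $|(\partial_1 z)_a|_L=|(\partial_1 z)_a|$ then yields both the energy identity and the matching energy changes.

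Your remark that the side-shell exclusion is redundant for the literal statement is accurate. The paper uses the exclusion only to place the final syndrome in $[-S_q,S_q]$, and that already follows from the final state lying in $\widetilde{\mathcal C}$. The exclusion matters in the coupling argument that follows, where it keeps the acceptance probabilities equal even for proposals that cross a cap.
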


\begin{proof}
Suppose $[z],[z']\in\widetilde{\mathcal C}$ have the same reduction modulo
$q$. Their reduced syndromes are equal. Since $|(\partial_1z)_a|, |(\partial_1z')_a| \leq S_q$ for every syndrome coordinate $a$, we have
\begin{equation}
\partial_1z=\partial_1z',
\end{equation}
and Lemma \ref{lem:onestatepersyndrome} gives $[z]=[z']$, so $\pi_q$ is injective on $\widetilde{\mathcal C}$. Similarly,
\begin{equation}
|(\partial_1z)_a|_L
=
|(\partial_1z)_a|
\end{equation}
for every syndrome coordinate $a$, and therefore
$E_Z(\pi_q[z])=\widetilde E_Z([z])$.

Every state in \(\mathcal C\) has a unique lift in \(\widetilde{\mathcal C}\). Suppose the integer initial state lies in \(\widetilde{\mathcal C}\setminus\widetilde{\mathcal C}_{\mathrm{side}}\). Then every syndrome coordinate has absolute value at most \(S_q-2\). Since a single-site update changes each syndrome coordinate by at most \(2\), if the updated integer state remains in \(\widetilde{\mathcal C}\), its syndrome also lies in \([-S_q,S_q]\). The initial and final syndromes, and hence the initial and final energies, are therefore identical before and after reduction modulo \(q\). Thus the corresponding updates have the same energy change and the same Metropolis acceptance probability.
\end{proof}

Since \(\pi_q\) is injective on \(\widetilde{\mathcal C}\) and each integer syndrome labels at most one state in \(\widetilde{\mathcal C}\), the $q$ logical translates
\begin{equation}
\mathcal C+M\ell_Z,
\qquad
M\in\mathbb{Z}_q,
\end{equation}
are pairwise disjoint.
Thus the logical translates of \(\mathcal C\) correspond to distinct logical sectors. We can then choose a decoder that assigns logical label \(M\) to every state in \(\mathcal C+M\ell_Z\), with the label outside these cylinders chosen arbitrarily. Since the cylinders are pairwise disjoint, this assignment is well defined. Therefore, a trajectory starting from the zero error cannot change its logical sector before leaving \(\mathcal C\). By Lemma \ref{lem:injectivity}, before this exit the modulo-\(q\) dynamics is identical to the integer dynamics in \(\widetilde{\mathcal C}\). Therefore the \(Z\)-sector memory time satisfies
\begin{equation}
t_{\mathrm{mem}}^Z(1/4)
\geq
\widetilde t_Z(1/4)
\geq
\min\left\{
2^{L-2},
\,
\frac{
\exp\!\left[
\beta(S_q-1)
-
(m_L-1)\log\!\left(
\frac{e^\beta+1}{e^\beta-1}
\right)
\right]
}{
8m_L(1+e^{-\beta})
}
\right\}.
\end{equation}

\begin{cor}[Temperature-independent choice of \(q\)]
Let $q=q(L)$ be an odd sequence chosen independently of temperature and
suppose
\begin{equation}
q=\omega(L^3).
\end{equation}
Then, for every fixed $\beta>0$, there exists $L_0(\beta)$ such that for all
$L\geq L_0(\beta)$,
\begin{equation}
t_{\mathrm{mem}}^Z(1/4)
\geq
2^{L-2}
=
e^{\Omega(L)}.
\end{equation}
For a fixed \(\beta>0\), the condition \(q=\omega(L^3)\) can be weakened to \(q=cL^3\), with any fixed \(c\) satisfying
\begin{equation} \label{eq:c-condition}
c>
\frac{2}{\beta}
\log\!\left(
\frac{e^\beta+1}{e^\beta-1}
\right).
\end{equation}
\end{cor}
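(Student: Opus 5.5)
The plan is to start from the bound established just above,
\[
t_{\mathrm{mem}}^Z(1/4)\ \ge\ \min\{2^{L-2},\,R_L\},
\qquad
R_L:=\frac{\exp\!\left[\beta(S_q-1)-(m_L-1)\log\!\left(\frac{e^\beta+1}{e^\beta-1}\right)\right]}{8m_L(1+e^{-\beta})},
\]
and to show that $R_L\ge 2^{L-2}$ for all $L\ge L_0(\beta)$. Once this holds, the minimum is the cap term $2^{L-2}=e^{\Omega(L)}$, and the corollary follows. Since $\beta>0$, the quantity $\kappa_\beta:=\log\frac{e^\beta+1}{e^\beta-1}$ is a finite positive constant. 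Because $m_L=L^2(L-1)\le L^3$, it suffices to compare the leading exponents.

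Take logarithms. The condition $R_L\ge 2^{L-2}$ is implied by
\[
\beta\,\frac{q-3}{2}\ \ge\ L^3\kappa_\beta+\log\bigl(8L^3(1+e^{-\beta})\bigr)+(L-2)\log 2 .
\]
Here I have used $S_q-1=(q-3)/2$ and $m_L-1\le L^3$.

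\emph{Case $q=\omega(L^3)$.} The left-hand side is $\frac{\beta}{2}q-O(1)$. Dividing the inequality by $L^3$, the left side tends to $+\infty$ while the right side tends to the constant $\kappa_\beta$. Hence there is an $L_0(\beta)$ beyond which the inequality holds. This $L_0$ depends on $\beta$, but the sequence $q$ does not, which is exactly the temperature-independence claimed.

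\emph{Case $q=cL^3$.} Here $q$ is understood as rounded to an odd integer, which changes it by $O(1)$. The left-hand side equals $\frac{\beta c}{2}L^3-O(1)$, and the right-hand side equals $\kappa_\beta L^3+O(L)$. Condition \eqref{eq:c-condition} is precisely the statement that $\delta:=\frac{\beta c}{2}-\kappa_\beta>0$. The difference between the two sides is therefore $\delta L^3-O(L)\to\infty$, which gives $L_0(\beta,c)$.

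In both cases the constraint $\tau<\tau_{\mathrm{cap}}$ in Lemma~\ref{lem:Side-hitting-bound} causes no trouble. We may take $\tau=2^{L-2}-1<\tau_{\mathrm{cap}}$, because $\tau_{\mathrm{cap}}\ge 2^{L-2}$ by \eqref{eq:cap-bound}.

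The only delicate point is this bookkeeping: the minimum in the bound must be realized by the cap term, and the dependence of the threshold $L_0$ on $\beta$ must be tracked, since $\kappa_\beta$ blows up as $\beta\to0$. Everything else is the elementary asymptotic comparison above, so I expect no real obstacle.
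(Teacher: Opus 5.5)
Your proposal is correct and follows essentially the same route as the paper. Both take the logarithm of the side-hitting term in $\min\{2^{L-2},R_L\}$ and observe that $\beta(q-3)/2$ dominates the $O(L^3)$ entropic term and the $O(\log L)$ prefactor; for $q=cL^3$, the leading coefficient $\beta c/2-\log\frac{e^\beta+1}{e^\beta-1}$ is positive. Hence the minimum is the cap term $2^{L-2}$. Your remarks on rounding $q$ to an odd integer and on the constraint $\tau<\tau_{\mathrm{cap}}$ are harmless bookkeeping that the paper leaves implicit.
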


\begin{proof}
Using $S_q=(q-1)/2$ and $m_L=L^2(L-1)$, the logarithm of the second term in the memory-time
lower bound is
\begin{equation}
\frac{\beta(q-3)}{2}
-
\left(L^2(L-1)-1\right)
\log\left(\frac{e^\beta+1}{e^\beta-1}\right)
-
\log\left[8L^2(L-1)(1+e^{-\beta})\right].
\label{eq:memory-time-log}
\end{equation}
For every fixed $\beta>0$, the first term in \eqref{eq:memory-time-log} is $\omega(L^3)$, while the second term is $O(L^3)$ and the last term is
$O(\log L)$. Therefore, the whole expression in \eqref{eq:memory-time-log} is $\omega(L^3)$.
Hence, for sufficiently large $L$, the second term in the memory-time lower bound is larger than
$2^{L-2}$. The minimum is therefore set by the cap bound, giving
\begin{equation}
t_{\mathrm{mem}}^Z(1/4)
\geq
2^{L-2}
=
e^{\Omega(L)}.
\label{eq:z-memory-final}
\end{equation}
Note that the result holds for every fixed
finite temperature, with the required system size $L_0(\beta)$ allowed to depend on the temperature.

For a fixed $\beta>0$, if $q=cL^3$, the leading $L^3$ contribution to \eqref{eq:memory-time-log}
is
\begin{equation}
\left[
\frac{\beta c}{2}
-
\log\left(\frac{e^\beta+1}{e^\beta-1}\right)
\right]L^3.
\label{eq:memory-time-leading}
\end{equation}
Hence any $c$ satisfying \eqref{eq:c-condition} makes the second term in the memory-time lower bound
exponentially larger than $2^{L-2}$ for sufficiently large $L$, giving the same bound
$t_{\mathrm{mem}}^Z(1/4)\geq 2^{L-2}$.
\end{proof}

\subsection{Heuristic analysis of the \(Z\)-sector entropic bound} \label{app:Z-q}

The rigorous bound above uses a very loose estimate of the entropic contribution. Specifically, \eqref{eq:side-weight-bound} introduces a factor that grows exponentially with $m_L=\Theta(L^3)$ by summing independently over all remaining syndrome coordinates. This step is the main source of the stringent requirement on $q$ in the analytical result.

To illustrate why this factor may substantially overestimate the entropic contribution, suppose for comparison that the $m_L$ syndrome coordinates fluctuate independently. Let $Z_1=\frac{e^\beta+1}{e^\beta-1}$. The total partition function would then be
\begin{equation}
Z_{\mathrm{ind}}=Z_1^{m_L}.
\label{eq:independent-partition-function}
\end{equation}
If one syndrome coordinate lies in the side shell, the remaining $m_L-1$ coordinates contribute the factor $Z_1^{m_L-1}$. Using \eqref{eq:side-weight-bound}, the corresponding normalized probability is therefore bounded by
\begin{equation}
P_{\mathrm{side}}^{\mathrm{ind}}
\lesssim
\frac{
2m_L(1+e^{-\beta})e^{-\beta(S_q-1)}Z_1^{m_L-1}
}{
Z_1^{m_L}
}
=
\frac{2m_L(1+e^{-\beta})}{Z_1}
e^{-\beta(S_q-1)}.
\label{eq:independent-side-probability}
\end{equation}
Thus the extensive factor $Z_1^{m_L-1}$ is removed by normalization. What remains is the factor $m_L$, which counts the possible syndrome coordinates that can reach the side shell, together with
\begin{equation}
e^{-\beta(S_q-1)}
=
e^{-\beta q/2+O(1)}.
\label{eq:independent-side-suppression}
\end{equation}
Under this independence assumption, the corresponding memory time would therefore scale as
\begin{equation}
t_{\mathrm{mem}}^Z
\sim
\operatorname{poly}(L)^{-1}e^{\beta q/2},
\label{eq:independent-memory-time}
\end{equation}
up to subexponential corrections.

The syndrome coordinates in the actual code are constrained and are not independent, so Eqs.~\eqref{eq:independent-partition-function}--\eqref{eq:independent-memory-time} are only a heuristic comparison and do not provide a rigorous bound for the original model. Numerically, at $\beta=1$, we find $t_{\mathrm{mem}}^Z\propto e^{0.477q}$, with an exponent close to $1/2$. This is consistent with the extensive entropic factor in \eqref{eq:side-weight-bound} substantially overestimating the entropic contribution, and suggests that a tighter analytical bound may allow a weaker scaling of $q$ in the rigorous result.

\section{Proof of $X$-error sector} \label{app:X}

In this appendix, we prove an exponential lower bound on the \(X\)-sector memory time starting from the thermal encoding of an arbitrary codeword.
The logical \(X\) operators in our code have the same form as in the ordinary 3D \(\mathbb Z_q\) surface code, and the stability proof for the \(X\)-error sector follows the same basic strategy as the thermal-stability analysis of the 4D toric code~\cite{Dennis2002topological,Alicki2010thermal}. We show that any sequence of local errors that produces a logical \(X\) error must pass through a configuration whose \(Z\) syndrome contains a connected cluster of size \(\Omega(L)\). We then show that, at sufficiently low temperature, the total Gibbs weight of configurations containing such a cluster is exponentially small in \(L\), which in turn yields an exponentially long memory time in the \(X\)-error sector.

\begin{lem} [Support-preserving chain isomorphism] \label{lem:deformation}
By replacing $K_L$ with $-R_L$ in (\ref{eq:chaincomplex1}), we obtain the ordinary 3D surface code over $\mathbb Z_q$. Let $H_Z^{(0)}$ be the $Z$-check matrix of the ordinary 3D surface code. For every odd $q$, the chain complexes of weighted and ordinary 3D surface codes are related by an invertible diagonal chain isomorphism. Moreover, there is a correction map $\mathsf{Rec}$ and a constant $\alpha>0$, independent of \(L\) and \(q\), such that if a single-site \(X^{\pm1}\) transition \(x\to x'\) has no connected component of either \(H_Zx\) or \(H_Zx'\) of size at least \(\alpha L\), then
\begin{equation}
x'-\mathsf{Rec}(H_Zx')-x+\mathsf{Rec}(H_Zx)
\in \operatorname{im}H_X^T .
\end{equation}
Thus such a transition cannot change the recovered logical \(X\) class.
    
\end{lem}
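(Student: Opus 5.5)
The plan has two parts: first build the chain isomorphism explicitly, then transport the standard local decoder of the ordinary 3D surface code through it. For the isomorphism it suffices to work on the $z$ factor, because $A^x_\bullet$ and $A^y_\bullet$ are unchanged. We look for diagonal matrices $D_1=\mathrm{diag}(d_1,\dots,d_L)$ on $\mathbb Z_q^L$ and $D_0=\mathrm{diag}(f_1,\dots,f_{L-1})$ on $\mathbb Z_q^{L-1}$ satisfying
\begin{equation}
K_L D_1 = D_0(-R_L) \pmod q .
\end{equation}
Row $j$ of $K_L$ has entries $-2$ at position $j$ and $1$ at position $j+1$. Row $j$ of $-R_L$ has entries $-1$ at $j$ and $1$ at $j+1$. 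Matching the two entries in row $j$ gives
\begin{equation}
2d_j = f_j,\qquad d_{j+1}=f_j .
\end{equation}
These are solved by $d_k=2^{k-1}$ and $f_j=2^j$. Since $q$ is odd, $2$ is a unit in $\mathbb Z_q$, so $D_0$ and $D_1$ are invertible.

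We then tensor with the identity maps on $A^x_\bullet\otimes\mathbb Z_q$ and $A^y_\bullet\otimes\mathbb Z_q$. This gives a diagonal chain isomorphism $\Phi_\bullet$ between the ordinary and weighted complexes over $\mathbb Z_q$; one can check that it intertwines the boundary maps, $\partial_i\Phi_i=\Phi_{i-1}\partial^{(0)}_i$. Each $\Phi_i$ is diagonal in the cell basis with unit entries, so it preserves supports exactly. In particular the weighted syndrome $H_Zx$ and the transported syndrome $\Phi^{-T}$-image of $H^{(0)}_Z$ applied to the transported error have the same support, so connected components are identical in the two pictures. The transport also respects the code structure: $\Phi_1$ maps $\ker\partial^{(0)}_1$ onto $\ker\partial_1$, and the dual map $\Phi_1^{-T}$ (or $\Phi_1^{T}$, depending on orientation) maps $\operatorname{im}H_X^{(0)T}$ onto $\operatorname{im}H_X^T$.

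Second, we set up the ordinary decoder. For the $X$ sector of the ordinary 3D $\mathbb Z_q$ surface code, the $Z$-syndrome lives on plaquettes, equivalently on dual edges, and forms closed dual loops with $\mathbb Z_q$ flux conservation. The decoder $\mathsf{Rec}^{(0)}$ fills each connected syndrome component of diameter less than $\alpha L$ by a minimal membrane contained in a box of linear size $O(\alpha L)$ around it; such a filling exists locally because the component is a boundary within that box. We then define
\begin{equation}
\mathsf{Rec}=\Phi_1^{-T}\circ\mathsf{Rec}^{(0)}\circ(\text{syndrome transport}).
\end{equation}
The central claim is the following. A single-site move changes the syndrome only near one edge. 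If all components before and after the move have size below $\alpha L$, then the only components that change lie inside a box of size $O(\alpha L)$ containing that edge. Consider the chain $x'-x$ together with the difference of the two local fillings. This is an $X$-cycle supported in a box far smaller than $L$ when $\alpha$ is small. A cycle supported in a ball cannot pair nontrivially with $\ell_Z$, since it can be deformed away, and homology is trivial on a contractible region with the relevant boundary conditions. It is therefore in $\operatorname{im}H_X^T$. We then transport this conclusion back through $\Phi$.

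The main obstacle is making the locality of $\mathsf{Rec}$ rigorous near the rough and smooth boundaries, where a small component can terminate on a boundary and the local filling must be allowed to end there. It also needs some care to choose the filling deterministically, so that unchanged components receive identical fillings before and after the move. My first attempt would be to define $\mathsf{Rec}^{(0)}$ component-by-component, using a fixed canonical filling rule such as the lexicographically minimal one within the bounding box. Then only components intersecting the neighbourhood of the flipped edge contribute to the difference. For the triviality step, I would use relative homology of a box of size at most $L/2$ intersected with the lattice, and check that it vanishes against the logical $\ell_Z$ representatives.
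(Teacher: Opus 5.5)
Your proposal is correct and follows essentially the same route as the paper. You derive the same diagonal maps $V_1=\mathrm{diag}(1,2,\ldots,2^{L-1})$ and $V_0=\mathrm{diag}(2,\ldots,2^{L-1})$ intertwining $K_L$ and $-R_L$, then transport a cluster-by-cluster local filling of small syndrome components through this support-preserving isomorphism, and finally argue that the small-support cocycle $x'-x-\mathsf{Rec}(H_Zx')+\mathsf{Rec}(H_Zx)$ is trivial. The two points you flag are the boundary relative homology and a deterministic filling chosen per component; the paper also handles these only briefly, with a relative-homology remark and an implicit fixed choice of $\Sigma_C$ for each cluster.
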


\begin{proof}
Since $q$ is odd, every power of $2$ is a unit in $\mathbb{Z}_q$. Define
\begin{subequations}
\begin{equation}
V_1=\operatorname{diag}\left(1,2,2^2,\ldots,2^{L-1}\right),
\end{equation}
\begin{equation}
V_0=\operatorname{diag}\left(2,2^2,\ldots,2^{L-1}\right).
\end{equation}
\end{subequations}
A direct calculation gives
\begin{equation}
K_L=V_0(-R_L)V_1^{-1}\pmod q.
\end{equation}
Tensoring these maps with the identity maps in the $x$ and $y$ directions induces diagonal isomorphisms $D_{\mathrm{err}}$ and
$D_{\mathrm{syn}}$ on the $X$-error and $Z$-syndrome spaces, respectively, with
\begin{equation}
H_Z=D_{\mathrm{syn}}H_Z^{(0)}D_{\mathrm{err}}^{-1}.
\end{equation}
The induced cochain isomorphism maps ordinary $X$ stabilizers to weighted $X$ stabilizers and therefore preserves
the logical $X$ sectors. Because all diagonal entries are units, $D_{\mathrm{err}}$ and $D_{\mathrm{syn}}$ do not change which coordinates are zero. Hence they preserve error support and syndrome support exactly.

It therefore remains to construct a local recovery for the ordinary 3D $\mathbb{Z}_q$ surface code.
In the dual lattice of the ordinary $\mathbb{Z}_q$ surface code, a physical qudit (edge), a $Z$-check plaquette, and a cube on the primal lattice correspond to a dual plaquette, dual edge, and dual vertex, respectively. Let the boundary maps of the dual chain complex, with the ordinary surface-code boundary conditions understood, be
\begin{equation}
\widetilde{C}_3
\xrightarrow{\widetilde{\partial}_3}
\widetilde{C}_2
\xrightarrow{\widetilde{\partial}_2}
\widetilde{C}_1
\xrightarrow{\widetilde{\partial}_1}
\widetilde{C}_0,
\label{eq:dual-chain-complex}
\end{equation}
and choose orientations such that $\widetilde{\partial}_2$ corresponds to $H_Z^{(0)}$.  Two syndrome dual edges with nonzero coefficients are called adjacent if they share a dual vertex, and the connected components under this adjacency relation are called connected syndrome clusters.

Let $\gamma$ be the dual $1$-chain corresponding to the full syndrome. Since it is the boundary of a dual $2$-chain, we have $\widetilde{\partial}_1\gamma=0$ and $\widetilde{\partial}_1\gamma_C=0$ for each connected syndrome cluster $\gamma_C$. Also, because the full dual complex has trivial first homology,
\begin{equation}
H_1(\widetilde C_\bullet;\mathbb Z_q)=0,
\end{equation}
every \(\gamma_C\in\ker\widetilde{\partial}_1\) lies in \(\operatorname{im}\widetilde{\partial}_2\). Hence every connected syndrome cluster is itself the boundary of a dual \(2\)-chain and therefore is a valid syndrome.

Let $m=|C|$. Since the dual lattice has bounded local degree, there exists a constant $c>0$, independent of $L$ and $q$, such that $C$ is contained in a cubical neighborhood $U_C$ satisfying $\operatorname{diam}(U_C)\leq cm$.
Choose $\alpha>0$ sufficiently small such that whenever
\begin{equation}
m<\alpha L,
\label{eq:dual-small-cluster}
\end{equation}
the region $U_C$ does not connect any pair of opposite physical boundaries, and hence cannot support any nontrivial logical \(X\) operator.
For such a region $U_C$, the local dual complex with the chosen ordinary surface-code boundary conditions has no nontrivial homology in degrees one and two:
\begin{equation}
H_1(U_C;\mathbb{Z}_q)
=
H_2(U_C;\mathbb{Z}_q)
=
0.
\label{eq:local-dual-homology}
\end{equation}
If $U_C$ intersects a physical boundary on which syndrome endpoints are allowed, \eqref{eq:local-dual-homology} is understood as the corresponding relative homology statement. By $\widetilde{\partial}_1\gamma_C=0$ and~\eqref{eq:local-dual-homology}, there exists a dual $2$-chain $\Sigma_C$, supported entirely in $U_C$, such that
\begin{equation}
\widetilde{\partial}_2\Sigma_C=\gamma_C.
\label{eq:local-dual-filling}
\end{equation}
Mapping \(\Sigma_C\) back to the primal lattice gives an ordinary-code \(X\)-error configuration \(y_C^{(0)}\), which is then mapped by the support-preserving chain isomorphism to a weighted-code \(X\) recovery $y_C$. Hence, for the $X$-error sector of the weighted 3D surface code, every connected syndrome cluster of size smaller than \(\alpha L\) admits a local \(X\) recovery supported entirely in \(U_C\).

For a valid syndrome $\gamma$, if all connected components $\gamma_C$ have size smaller than $\alpha L$, we can define
\begin{equation}
\mathsf{Rec}(\gamma)=\sum_C y_C
\label{eq:recovery-map}
\end{equation}
as the correction map.
For a valid syndrome containing a connected component of size at least $\alpha L$, fix any recovery; the locality of these recoveries will not be needed below.

Now consider a single-site $X^{\pm1}$ transition $x\to x'$, and suppose that all connected components of both $H_Zx$ and $H_Zx'$ have size smaller than $\alpha L$. This update can only change finitely many connected components. After decreasing $\alpha$ by a fixed constant factor if necessary, their local recoveries, together with the support of $x'-x$, can therefore all be contained in a region $U$ that does not connect any pair of opposite physical boundaries. Since any nontrivial logical $X$ operator connects opposite boundaries,
\begin{equation}
x'-\mathsf{Rec}(H_Zx')
-x+\mathsf{Rec}(H_Zx)
\in\operatorname{im}H_X^T.
\label{eq:recovered-logical-class-invariant}
\end{equation}
This completes the proof.

\end{proof}

After decreasing \(\alpha\) by a fixed constant factor if necessary, Lemma~\ref{lem:deformation} implies that any syndrome for which a single-site \(X^{\pm1}\) update can change the recovered logical \(X\) class contains a connected component of size at least \(\alpha L\). We next bound the Gibbs probability of such syndromes by a Peierls-type argument~\cite{Peierls1936ising,Griffiths1964peierls}.
\begin{lem}[Peierls estimate]
\label{lem:uniform-lee-activity-peierls}
There exists a finite \(\beta_X\) such that for every odd \(q\) and every fixed \(\beta>\beta_X\),
the Gibbs probability $P_{\mathrm{crit}}^X$ that the $Z$ syndrome contains a connected component of size at least $\alpha L$ satisfies
\begin{equation}
P_{\mathrm{crit}}^X
\leq
\operatorname{poly}(L)e^{-\kappa_X L}
\label{eq:x-critical-probability}
\end{equation}
for some $\kappa_X>0$ independent of $L$ and $q$.
\end{lem}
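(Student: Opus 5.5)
We prove the Peierls estimate by reducing the Gibbs measure on $X$-error classes to a measure on valid $Z$ syndromes and then running a cluster-expansion style union bound over connected supports. The Gibbs distribution on $X$ errors modulo $X$ stabilizers, restricted to a fixed logical $X$ class, induces on syndromes $s=H_Zx$ the measure $\pi(s)\propto e^{-\beta|s|_L}\,\mathbf 1[s\in\operatorname{im}H_Z]$. The multiplicity of error classes per syndrome is the same for every valid syndrome, so this holds whichever thermal encoding we start from. Using Lemma~\ref{lem:deformation}, the invertible diagonal map $D_{\mathrm{syn}}$ preserves supports. Connectivity of syndrome clusters can therefore be analysed on the dual lattice of the ordinary code, which has bounded degree $d$ independent of $L$ and $q$. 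Lee weights are not preserved by $D_{\mathrm{syn}}$, but we never need that: we work with the weighted syndrome directly and use only supports.

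\textbf{Removing a cluster.} The key step is to bound the probability that a fixed connected set $C$ of dual edges, with $|C|=m\ge\alpha L$, is exactly a connected component of the support of $s$. In the proof of Lemma~\ref{lem:deformation} we showed that each connected cluster $\gamma_C$ is itself a valid syndrome. Hence $s\mapsto s-\gamma_C$ maps valid syndromes in which $C$ is a component to valid syndromes vanishing on $C$. This gives
\begin{equation}
\pi(C\text{ is a component})\le \frac{\sum_{s}e^{-\beta|s|_L}}{\sum_{s'}e^{-\beta|s'|_L}}
\le \sum_{\gamma\,:\,\operatorname{supp}\gamma=C}e^{-\beta|\gamma|_L},
\end{equation}
where the first sum runs over valid $s$ with $C$ a component, the second over valid $s'$ with $s'|_C=0$, and the final sum over all valid cluster values $\gamma$ supported exactly on $C$. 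The second inequality uses that $s=s'+\gamma$ with $s'$ vanishing on $C$, so the Lee weight splits additively: $|s|_L=|s'|_L+|\gamma|_L$. Dropping the validity constraint on $\gamma$ and summing each coordinate over nonzero $r\in\mathbb Z_q$ gives the bound
\begin{equation}
\Bigl(2\sum_{r\ge1}e^{-\beta r}\Bigr)^m=\Bigl(\frac{2}{e^\beta-1}\Bigr)^m,
\end{equation}
which is Eq.~\eqref{eq:boltzmanweight}. This bound is uniform in $q$.

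\textbf{Counting and union bound.} The number of connected edge sets of size $m$ containing a given dual edge in a graph of degree at most $d$ is at most $(ed)^m$. Taking a union bound over the $O(L^3)$ starting edges and over all $m\ge\alpha L$ gives
\begin{equation}
P^X_{\mathrm{crit}}\le O(L^3)\sum_{m\ge\alpha L}\Bigl(\frac{2ed}{e^\beta-1}\Bigr)^m .
\end{equation}
Choose $\beta_X$ such that $2ed/(e^{\beta_X}-1)<1$. For $\beta>\beta_X$ the sum is geometric, and we obtain $\operatorname{poly}(L)e^{-\kappa_XL}$ with
\begin{equation}
\kappa_X=\alpha\log\frac{e^\beta-1}{2ed}>0,
\end{equation}
which is independent of $L$ and $q$.

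\textbf{Main obstacle.} The delicate point is the cluster-removal step. It requires that subtracting $\gamma_C$ keeps the syndrome valid and does not alter the remaining components, which holds because the map sends exactly the configurations with $C$ as a component to configurations vanishing on $C$. It also requires justifying the ratio bound when the measure is conditioned on a logical class. Since every valid syndrome has the same number of error-class preimages in each logical sector, the syndrome marginal is the same for every logical sector, and the argument goes through. Boundary clusters, which are relative cycles, are handled identically by the relative statement already used in Lemma~\ref{lem:deformation}.
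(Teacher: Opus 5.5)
Your proposal is correct and follows the paper's proof essentially step for step. Both arguments use the same cluster-removal injection (each connected component is itself a valid syndrome), additivity of the Lee weight over disjoint supports, the $q$-uniform single-site sum $2/(e^\beta-1)$, and a union bound over connected supports of size at least $\alpha L$. The differences are cosmetic. You count connected supports by $(ed)^m$ rather than the paper's depth-first-traversal bound $\Delta^{2m}$, which is valid if you take $d$ to be the degree of the adjacency graph on syndrome dual edges. You also spell out why conditioning on a logical sector leaves the syndrome marginal unchanged, which the paper leaves implicit.
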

\begin{proof}
Since $E_X(x):=|H_Zx|_L$,
the Gibbs weight of a valid syndrome $s$ is proportional to $e^{-\beta|s|_L}$.
By the support-preserving chain isomorphism established above, every connected component of an allowed weighted syndrome is itself a valid syndrome. Fix such a component $C$, including its nonzero $\mathbb{Z}_q$ values. Removing $C$ from any valid syndrome in which it occurs gives an injective map into the set of valid syndromes. Since $C$ has disjoint support from the remaining syndrome, $|s|_L=|C|_L+|s-C|_L$,
so
\begin{equation}
\Pr[C\text{ occurs}]
\leq
e^{-\beta|C|_L}.
\label{eq:component-occurrence-bound}
\end{equation}

Let $\Delta$ be the maximum degree of the syndrome adjacency graph. After fixing an ordering of the adjacent checks, each connected support of size $m$ containing a fixed $Z$ check can be assigned a depth-first traversal of a spanning tree of length at most $2(m-1)$. Hence the number of such supports is at most $\Delta^{2m}$.
Since\begin{equation}
\sum_{a\in\mathbb{Z}_q\setminus\{0\}} e^{-\beta|a|_L}
=
2\sum_{r=1}^{(q-1)/2}e^{-\beta r}
\leq
\frac{2}{e^\beta-1},
\label{eq:uniform-lee-activity}
\end{equation}
for any fixed support of size $m$, summing \eqref{eq:component-occurrence-bound} over all nonzero syndrome values on that support and using \eqref{eq:uniform-lee-activity} gives a total weight at most
\begin{equation}
\left(
\frac{2}{e^\beta-1}
\right)^m.
\label{eq:fixed-support-total-weight}
\end{equation}
The total number of $Z$ checks is $N_Z=(L-1)^3+2L^2(L-1)$.
A union bound therefore gives
\begin{equation}
P_{\mathrm{crit}}^X
\leq
N_Z
\sum_{m\geq\alpha L}
\left(
\frac{2\Delta^2}{e^\beta-1}
\right)^m.
\label{eq:x-critical-union-bound}
\end{equation}
Choose $\beta_X$ so that
\begin{equation}
\frac{2\Delta^2}{e^{\beta_X}-1}
<
1.
\label{eq:beta-x-condition}
\end{equation}
For every fixed $\beta>\beta_X$, the geometric series in \eqref{eq:x-critical-union-bound} is exponentially small in $L$, which proves \eqref{eq:x-critical-probability} with constants independent of $q$.
\end{proof}

Let \(A_m\), \(m\in\mathbb Z_q\), denote the configurations with recovered logical \(X\) class \(m\), and let \(\pi\) denote the Gibbs distribution. By logical-translation symmetry, \(\pi(A_m)=1/q\), and the stationary one-step probability of leaving \(A_m\) is the same for every \(m\). By Lemma~\ref{lem:deformation}, any transition that changes the recovered logical \(X\) class has at least one endpoint whose syndrome contains a connected component of size at least \(\alpha L\). Therefore, by stationarity,
\begin{equation}
\sum_{m\in\mathbb{Z}_q}
\Pr_{\pi}
\!\left(
X_0\in A_m,\,
X_1\notin A_m
\right)
\leq
2P_{\mathrm{crit}}^X.
\label{eq:x-sector-total-exit-flow}
\end{equation}
Consequently,
\begin{equation}
\frac{
\Pr_{\pi}
\!\left(
X_0\in A_0,\,
X_1\notin A_0
\right)
}{
\pi(A_0)
}
\leq
2P_{\mathrm{crit}}^X.
\label{eq:x-sector-normalized-exit-flow}
\end{equation}
For the thermal encoding of logical class $0$, the initial distribution is $\pi(\,\cdot\,|A_0)$. At every time $s\geq0$, stationarity implies that the distribution of $X_s$ is pointwise bounded by $\pi/\pi(A_0)$. Hence \eqref{eq:x-sector-normalized-exit-flow} gives
\begin{equation}
\Pr_{\pi(\,\cdot\,|A_0)}
\!\left(
X_s\in A_0,\,
X_{s+1}\notin A_0
\right)
\leq
2P_{\mathrm{crit}}^X.
\label{eq:x-sector-one-step-exit}
\end{equation}
A union bound therefore yields
\begin{equation}
\Pr_{\pi(\,\cdot\,|A_0)}
\!\left(
\text{the trajectory leaves }A_0
\text{ within the first }t\text{ steps}
\right)
\leq
2tP_{\mathrm{crit}}^X.
\label{eq:x-sector-exit-union-bound}
\end{equation}
Using Lemma~\ref{lem:uniform-lee-activity-peierls}, for any fixed $0<c_X<\kappa_X$,
\begin{equation}
2e^{c_XL}P_{\mathrm{crit}}^X
\leq
2\,\operatorname{poly}(L)
e^{-(\kappa_X-c_X)L}
\longrightarrow
0.
\label{eq:x-sector-exit-exponential}
\end{equation}
Thus, for all sufficiently large $L$, the probability of leaving $A_0$ within $e^{c_XL}$ steps is smaller than $1/4$. Since decoding failure at time $t$ requires the trajectory to have left $A_0$ by time $t$,
\begin{equation}
t_{\mathrm{mem}}^X(1/4)
\geq
e^{c_XL},
\label{eq:x-sector-memory-time}
\end{equation}
with $c_X>0$ independent of $L$ and $q$.

\end{document}